\begin{document}

\title{\large{Derangetropy in Probability Distributions and Information Dynamics}}

\author{Masoud Ataei$^1$ \and
Xiaogang Wang$^2$
}

\institute{Department of Mathematical and Computational Sciences, University of Toronto, Ontario, Canada\\ \email{masoud.ataei@utoronto.ca}
\and
Department of Mathematics and Statistics, York University, Ontario, Canada\\ \email{stevenw@yorku.ca}
}

\maketitle              

\begin{abstract}
We introduce derangetropy, a novel functional measure designed to characterize the dynamics of information within probability distributions. Unlike scalar measures such as Shannon entropy, derangetropy offers a functional representation that captures the dispersion of information across the entire support of a distribution. By incorporating self-referential and periodic properties, it provides deeper insights into information dynamics governed by differential equations and equilibrium states. Through combinatorial justifications and empirical analysis, we demonstrate the utility of derangetropy in depicting distribution behavior and evolution, providing a new tool for analyzing complex and hierarchical systems in information theory.
\keywords{derangetropy, 
	information dynamics,
	probability distributions,
	functional measures,
	entropy,
	combinatorial analysis,
	differential equations,
	equilibrium analysis}
\end{abstract}

	\section{Introduction}
The accurate quantification and analysis of information are critical in many scientific disciplines, from data science and information theory to quantum mechanics and statistical physics. Traditionally, Shannon entropy has long served as the foundational measure of information, providing a scalar summary of uncertainty within a probability distribution \cite{Shannon1948}. While Shannon entropy's elegance and simplicity have ensured its widespread adoption, it also bears significant limitations, particularly when dealing with complex, non-Gaussian distributions or systems characterized by high-dimensional data.

A key limitation of Shannon entropy is its reduction of the functional space of distributions to a single scalar value, which can lead to identical entropy values for distinct distributions, thereby potentially obscuring their unique characteristics. This issue is particularly pronounced in analyzing non-Gaussian distributions, common in fields such as finance, genomics, and signal processing, where Shannon entropy often fails to capture intricate dependencies and tail behaviors critical to understanding information dynamics.

Moreover, Shannon entropy's reliance on a scalar summary frequently decouples it from other important statistical properties, such as variance, which may offer more direct insights into data spread and reliability \cite{petty2018some}. In certain cases, particularly involving nonlinear mappings, Shannon entropy can paradoxically suggest a reduction in information even when measurements enhance understanding of a variable, as it primarily emphasizes distribution flatness at the expense of other significant features \cite{cincotta2021shannon}.

To address these challenges, various extensions of entropy have been proposed, including Rényi entropy \cite{Renyi1961} and Tsallis entropy \cite{Tsallis1988}. While these measures have their merits, they remain fundamentally scalar-centric and do not fully address the need to capture the functional characteristics of information distribution across a distribution's entire support.

In response to these limitations, we propose \textit{derangetropy}, a novel conceptual framework that departs from traditional approaches to information measurement. Unlike existing measures, derangetropy is not merely an extension of entropy but represents a new framework that captures the dynamics of information across the entire support of probability distributions. By incorporating self-referential and periodic properties, derangetropy offers a richer understanding of information dynamics, particularly in systems where information evolve cyclically or through feedback mechanisms such as artificial neural networks.

This paper establishes the foundation of derangetropy by exploring its theoretical framework and properties. We begin in Section 2 by introducing the mathematical definition and key properties of derangetropy. Section 3 analyzes the information dynamics and equilibrium within this framework. In Section 4, we discuss the self-referential and self-similar nature of derangetropy. Section 5 provides a combinatorial perspective on derangetropy, connecting it to fundamental principles of combinatorial analysis. Finally, Section 6 summarizes the findings and suggests potential avenues for future research, including applications in information theory and beyond.

\section{Mathematical Definition and Properties}
\subsection{Derangetropy Functional}
Let $(\Omega, \mathscr{F}, \mathbb{P})$ be a probability space, where $\Omega$ denotes the sample space, $\mathscr{F}$ is a $\sigma$-algebra of measurable subsets of $\Omega$, and $\mathbb{P}$ is a probability measure on $\mathscr{F}$. Consider a real-valued random variable $X: \Omega \rightarrow \mathbb{R}$ that is measurable with respect to the Borel $\sigma$-algebra $\mathscr{B}_{\mathbb{R}}$ on $\mathbb{R}$. Assume that $X$ has an absolutely continuous distribution with a \textit{probability density function} (PDF) $f~\in~\mathcal{L}^2(\mathbb{R}, \mathscr{B}_{\mathbb{R}}, \lambda)$, where $\lambda$ denotes the Lebesgue measure on $\mathbb{R}$. The \textit{cumulative distribution function} (CDF) associated with random variable $X$ is given by
\begin{equation*}
	F(x) = \int_{-\infty}^x f(t) \, d\lambda(t), \quad x \in \mathbb{R}.
\end{equation*}

In the following, we present the formal definition of the derangetropy functional.
\begin{definition}[Derangetropy]
	\label{Def:Derangetropy_Sin}
	The \textit{derangetropy functional} $\rho: \mathcal{L}^2(\mathbb{R}, \mathscr{B}_{\mathbb{R}}, \lambda) \rightarrow \mathcal{L}^2(\mathbb{R}, \mathscr{B}_{\mathbb{R}}, \lambda)$ is defined by the following mapping
	\begin{equation}
		\rho[f](x) = \left(\frac{24}{\pi e}\right) \sin(\pi F(x)) F(x)^{F(x)} (1-F(x))^{1-F(x)}  f(x),
	\end{equation}
	where $f$ is the PDF associated with random variable $X$, and $\rho[f]$ refers to the function in $\mathcal{L}^2(\mathbb{R}, \mathscr{B}_{\mathbb{R}}, \lambda)$ obtained through this mapping. The evaluation of the derangetropy functional at a specific point $x \in \mathbb{R}$ is denoted by $\rho_f(x)$.
\end{definition}

The sine function's periodicity plays a crucial role in modeling systems characterized by cyclical or repetitive phenomena, where information alternates between concentrated and dispersed states. Specifically, the sine function serves as both a \textit{modulation mechanism} and a \textit{projection operator}. As a modulation mechanism, the sine function encodes the cyclical aspects of information content, capturing the periodic nature of information flow in systems with intrinsic cycles, such as those observed in time series data. By modulating the information content, the sine function effectively characterizes the periodic dynamics of information distribution within the probability space. Simultaneously, as a projection operator, the sine function translates the modulated information into a functional space where the dynamics of information flow become more apparent. This dual role facilitates the analysis of complex systems, enabling the representation of higher-dimensional informational structures and their influence on observable behavior.

To gain more insights into its structure, the derangetropy functional can be interpreted as a Fourier-type transformation as follows:
\begin{equation}
	\rho[f](x) = \left(\frac{24}{\pi \epsilon}\right) \sin(\pi F(x))\,  e^{-H_B(F(x))},
\end{equation}
where
\begin{equation}
	H_B(F(x)) = -F(x) \log F(x) - (1-F(x)) \log (1-F(x)),
\end{equation}
is the \textit{Shannon entropy} for a Bernoulli distribution with success probability $p = F(x)$. The term $H_B(F(x))$ quantifies the uncertainty or informational balance between the regions to the left and right of $x$, as indicated by the CDF values $F(x)$ and $1-F(x)$, respectively. In this interpretation, the derangetropy functional can be viewed as mapping the current informational energy space, represented by Shannon entropy, into a space characterized by oscillatory behavior. The exponential term $e^{-H_B(F(x))}$ modulates the sine function $\sin(\pi F(x))$, adding a layer of complexity that reflects how information is concentrated or dispersed across the distribution. This modulation is essential for understanding \textit{localized informational dynamics}, where the oscillations of the sine function mirror the underlying fluctuations in information content.

\subsection{Empirical Observations and Insights}
To illustrate the behavior of $\rho_f(x)$, we examine five representative distributions: uniform, normal, exponential, semicircle, and arcsin. Each distribution highlights different characteristics such as symmetry, skewness, tail behavior, and boundary effects, offering insights into the relationship between probability density functions and their corresponding derangetropy functionals.

As depicted in Figure \ref{fig:Derangetropy_Distributions}, symmetric distributions, like the normal and semicircle distributions, result in symmetric derangetropy functionals with prominent central peaks, indicating regions of high informational content. This symmetry suggests a strong correspondence between areas of high probability density and regions of significant information content, particularly in unimodal distributions where the central peaks of $\rho_f(x)$ aligns with the median, marking a key \textit{informational hotspot}.

In contrast, asymmetric distributions, such as the exponential distribution, produce skewed derangetropy patterns. These patterns reveal how the concentration of probability mass influences informational dynamics, with $\rho_f(x)$ shifting towards regions of higher probability density. This behavior is critical for understanding how information is distributed in systems characterized by non-uniform or skewed data.

Furthermore, the behavior of $\rho_f(x)$ near the boundaries of the distribution's support, particularly in distributions like the exponential and arcsin, emphasizes the importance of boundary dynamics in shaping the informational landscape. The pronounced derangetropy at the boundaries suggests that information content is significantly influenced by the extremities of the distribution. This observation is particularly relevant in fields like risk management or extreme value theory, where the accumulation and dissipation of information at the boundaries can provide critical insights into extreme events or rare occurrences.

Finally, the amplitude and frequency of oscillations in $\rho_f(x)$ are directly tied to the complexity of the underlying distribution. For distributions with uniform or smoothly varying density, such as the uniform or normal distributions, the oscillations in $\rho_f(x)$ are regular and predictable, indicating a steady flow of information. Conversely, distributions with more complex or non-uniform density, like the arcsin distribution, exhibit irregular oscillations, signaling that the flow of information is more dynamic and potentially chaotic. Thus, derangetropy can serve as a measure of the \textit{informational complexity} of a distribution.

\begin{figure}
	\centering
	\subfloat{{\includegraphics[scale=0.45]{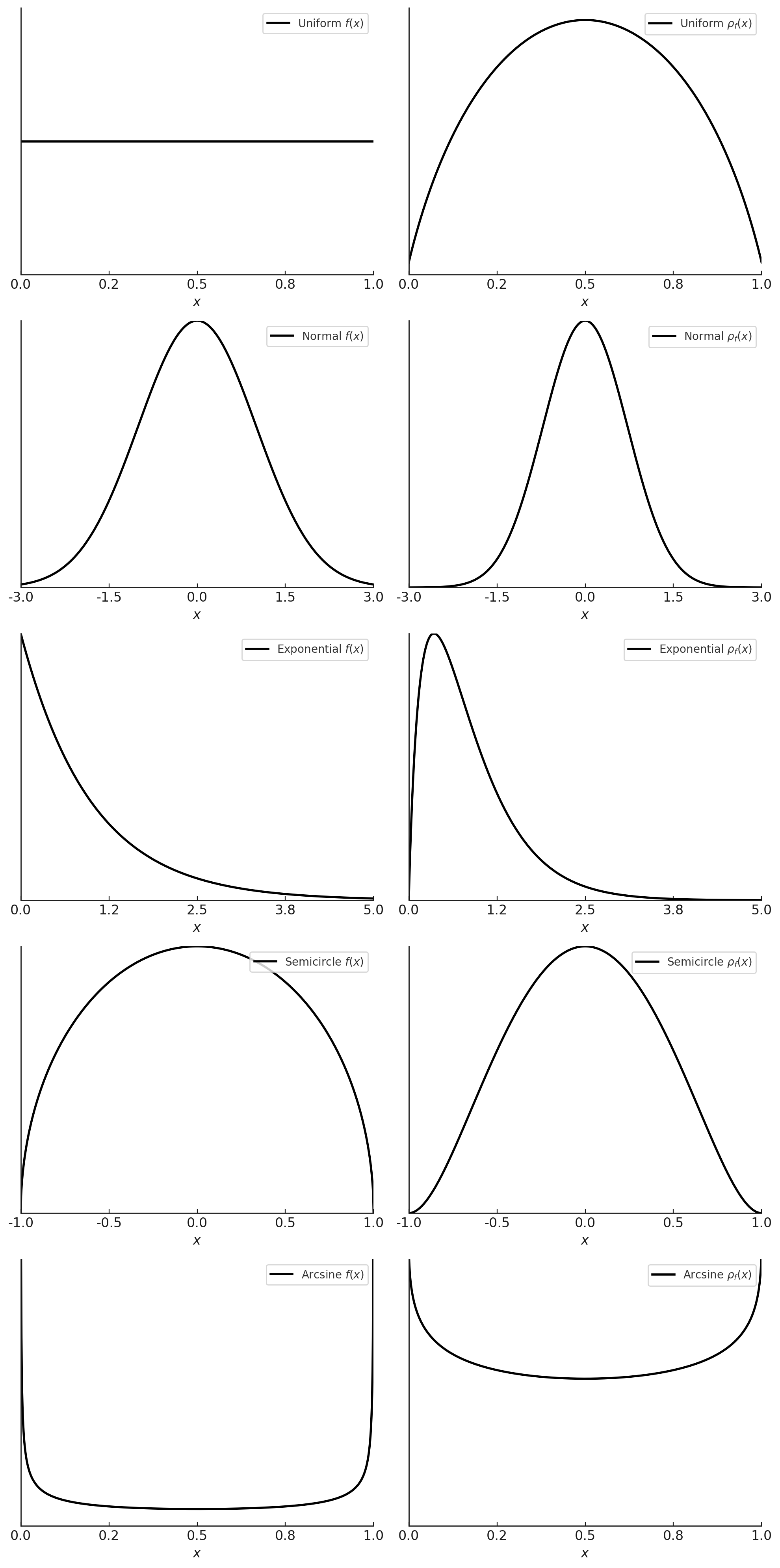} }}%
	\caption{Plots of probability density functions $f(x)$ (left) and derangetropy functionals $\rho_f(x)$ (right) for $\operatorname{Uniform}(0,1)$, $\operatorname{Normal}(0,1)$, $\operatorname{Exponential}(1)$, $\operatorname{Semicircle}(-1,1)$, and $\operatorname{Arcsin}(0,1)$ distributions.}%
	\label{fig:Derangetropy_Distributions}%
\end{figure}

\section{Information Dynamics and Equilibrium}

\subsection{Framework for Informational Energy}
The dynamics of informational content within a probability distribution can be likened to energy conservation principles in classical mechanics. However, the derangetropy functional reveals a more intricate interaction between different forms of informational energy, moving beyond strict conservation. Instead, the functional emphasizes a dynamic equilibrium between these energies, illustrating how information is distributed, concentrated, or dissipated across the distribution. This perspective provides deeper insights into the stability and evolution of information within complex systems.

The informational content described by the derangetropy functional $\rho_f(x)$ can be decomposed into distinct components that represent various aspects of information distribution. Mathematically, this decomposition is expressed as:
\begin{equation}
	\log \left(\rho_f(x)\right) = \log (\sin (\pi F(x))) - H_B(F(x)) + \log (f(x)) + C,
\end{equation}
where $C = \log \left(\frac{24}{\pi e}\right)$ is a constant. The total informational energy $\mathrm{E}_{\text{Total }}$ within a distribution is the sum of two primary components: oscillatory and structural informational energies. These components interact dynamically, maintaining a form of equilibrium across the distribution:
\begin{equation}
	\mathrm{E}_{\text{Total }} = -\log \left(\rho_f(x)\right) = \mathrm{E}_{\text{Oscillatory }} + \mathrm{E}_{\text{Structural }} + C.
\end{equation}
On the one hand, the \textit{oscillatory informational energy} defined by
\begin{equation}
	\mathrm{E}_{\text{Oscillatory }}(x) = -\log (\sin (\pi F(x))),
\end{equation}
captures the dynamic, cyclical flow of information within the distribution, similar to kinetic energy in physics, as energy shifts between different states or locations within the system.  		
On the other hand, the \textit{structural informational energy} defined by
\begin{equation}
	\mathrm{E}_{\text{Structural }}(x) = H_B(F(x)) - \log (f(x))
\end{equation}
reflects the inherent stability and uncertainty of the distribution, analogous to potential energy in physics.

Using the uniform distribution defined on $(0, 1)$ for illustrative purposes, Figure \ref{fig:Derangetropy_Energy} visually demonstrates how the total informational energy $\mathrm{E}_{\text{Total }}(x)$ varies across the distribution, driven by the interaction between oscillatory and structural energies. The solid line in the figure represents the total informational energy, which exhibits a parabolic shape with a minimum at $x = 0.5$ and steep rises near the boundaries. This pattern indicates that the energy distribution is influenced by both the oscillatory and structural components, with higher energy values near the boundaries suggesting a concentration of information content in these regions, where the distribution is less stable.

Furthermore, the dotted line, representing the oscillatory energy, dominates near the boundaries $x = 0$ and $x = 1$. The sharp increase in this component near the edges reflects the intense, cyclic nature of the information content in these areas, where the distribution is most unstable. As $x$ approaches 0 or 1, the sine function approaches zero, leading to spikes in oscillatory energy, which in turn causes significant instability. Also, the dashed line represents the structural energy, which peaks around the median $x = 0.5$ of the distribution. This peak reflects the inherent stability and maximum uncertainty at the center of the distribution. The structural energy remains relatively flat across the distribution, except near the boundaries where it diminishes, reflecting the distribution's overall stability in its central region compared to the more volatile boundary regions.

This figure shows that the total energy $\mathrm{E}_{\text{Total}}$ is not conserved in the traditional sense; it varies due to the compensatory interaction between $\mathrm{E}_{\text{Oscillatory}}$ and $\mathrm{E}_{\text{Structural}}$. As oscillatory energy increases near the boundaries, structural energy's relative contribution diminishes, and vice versa near the center. This dynamic equilibrium highlights how information is distributed and transformed within the distribution, providing a visual validation of the theoretical concepts discussed.

\begin{figure}
	\centering
	\subfloat{{\includegraphics[scale=0.45]{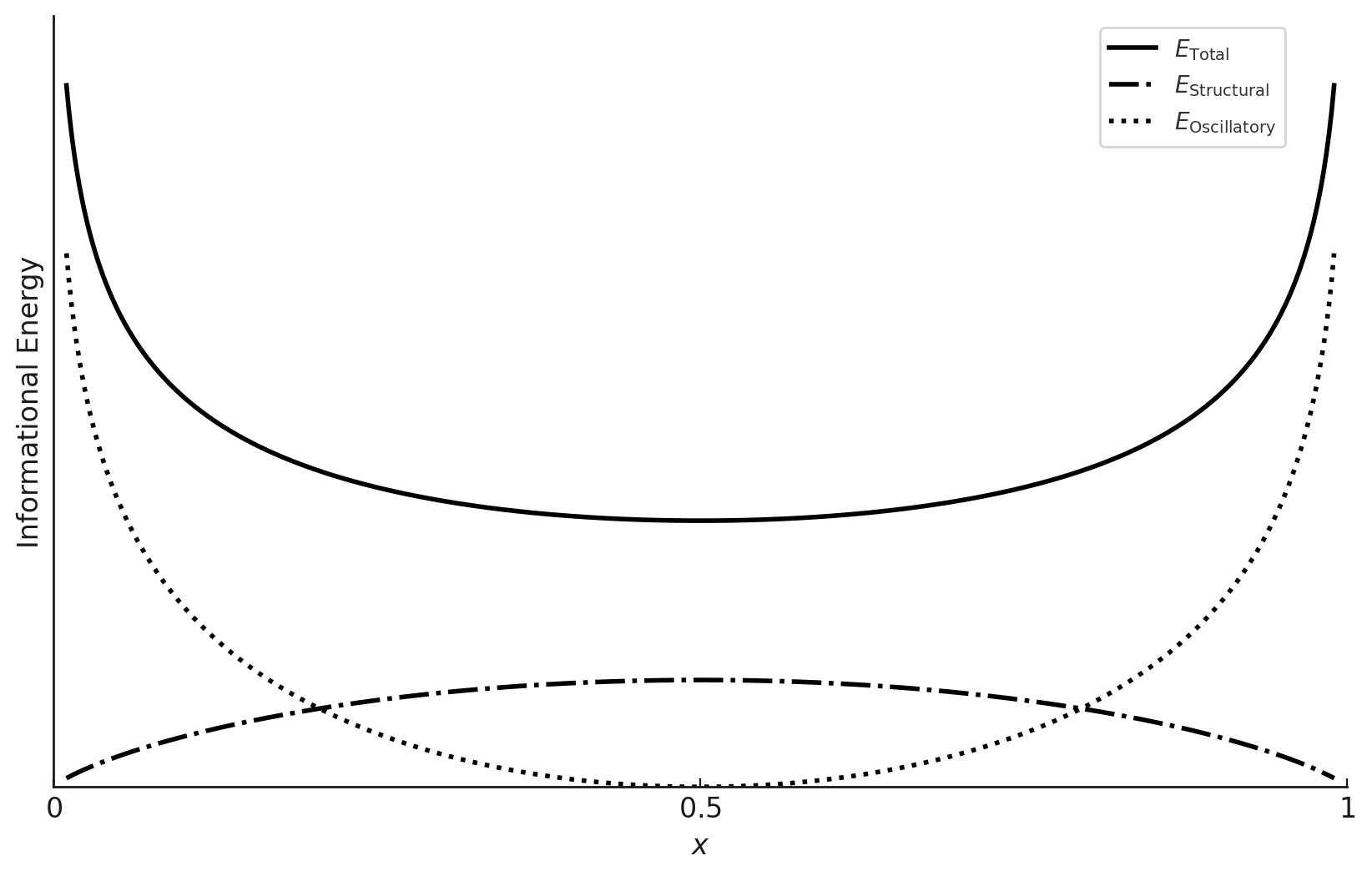} }}%
	\caption{Plots of total (solid line), structural (dashed line) and oscillatory (dotted line) informational energies for $\operatorname{Uniform}(0,1)$ distribution.}
	\label{fig:Derangetropy_Energy}%
\end{figure}
\subsection{Equilibrium Analysis}
Understanding equilibrium within the derangetropy framework requires analyzing the interaction between oscillatory and structural informational energies. Oscillatory energy, characterized by its fluctuations, tends to dominate near the boundaries of the distribution, where the probability density is low, and the informational content is more volatile. Conversely, structural energy is more pronounced near the center, reflecting the balance and uniformity of information in that region. This dynamic interplay forms the basis for understanding how equilibrium is achieved and maintained—or disrupted—within the distribution.

Equilibrium points within the derangetropy framework are critical for understanding the stability and dynamics of information distribution. These points are defined as locations where the derivative of the total informational energy with respect to $x$ is zero. The nature of these equilibrium points—whether they are stable, unstable, or saddle points—can be determined by examining the second derivative of the total informational energy.

For the uniform distribution, where $F(x) = x$ and $f(x) = 1$, the total informational energy is:
\begin{equation}
	\mathrm{E}_{\text{Total }}(x) = -\log (\sin (\pi x)) + H_B(x) + C.
\end{equation}
At $x = 0.5$, the equilibrium analysis reveals a critical point of instability. The minimum in the total energy curve at $x = 0.5$ suggests that this is a point of balance between the oscillatory and structural energies. However, this point is an unstable equilibrium, where small perturbations could cause the system to move away from equilibrium. Near the boundaries $x = 0$ and $x = 1$, the total energy increases steeply, indicating regions of instability. The oscillatory energy drives this instability, while the structural energy diminishes, making these points prone to significant shifts in the information distribution. This behavior is particularly relevant for understanding the distribution's tail properties, where extreme or rare events are more likely to occur.

\subsection{Key Mathematical Properties}
We now establish key mathematical properties of the derangetropy functional, $\rho_f(x)$, that underscore its utility in analyzing probability distributions. These properties provide insight into the behavior of $\rho_f(x)$ across different types of distributions and highlight its relevance to information dynamics. For any absolutely continuous PDF $f(x)$, the derangetropy functional $\rho[f](x)$ is a nonlinear operator that belongs to the space $C^{\infty}(\mathbb{R})$ having the following first derivative
\begin{equation}
	\label{Eq:Derivative}
	\frac{d}{dx} \rho_f(x) = \rho_f(x) \left[ \pi \cot(\pi F(x)) + \log\left(\frac{F(x)}{1-F(x)}\right) + \frac{f'(x)}{f(x)} \right],
\end{equation}
where the derivatives are taken with respect to $x$. The following theorem shows that the derangetropy functional involving distribution functions of random variable X, itself is a valid PDF for another random variable.
\begin{theorem}
	\label{Thrm:Normalization}
	For any absolutely continuous $f(x)$, the derangetropy functional $\rho_f(x)$ is a valid PDF.
\end{theorem}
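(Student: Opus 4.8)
The plan is to verify the two defining conditions of a PDF: nonnegativity and integration to one. Nonnegativity is immediate: for $x \in \mathbb{R}$ we have $F(x) \in [0,1]$, so $\sin(\pi F(x)) \ge 0$, the factor $F(x)^{F(x)}(1-F(x))^{1-F(x)}$ is a product of nonnegative quantities raised to nonnegative powers (with the usual convention $0^0 = 1$), the constant $24/(\pi e)$ is positive, and $f(x) \ge 0$ since $f$ is a density. Hence $\rho_f(x) \ge 0$ everywhere. It remains to show $\int_{\mathbb{R}} \rho_f(x)\, d\lambda(x) = 1$.

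The key step is a change of variables. Since $f$ is absolutely continuous, setting $u = F(x)$ with $du = f(x)\, dx$ pushes the integral forward to the unit interval:
\begin{equation*}
	\int_{\mathbb{R}} \rho_f(x)\, d\lambda(x) = \frac{24}{\pi e} \int_0^1 \sin(\pi u)\, u^{u} (1-u)^{1-u}\, du.
\end{equation*}
(One should note in passing that the portion of $\mathbb{R}$ where $f$ vanishes contributes nothing and that $F$ maps the support onto a set of full measure in $[0,1]$, so the substitution is valid; any atoms of $F$ at $0$ or $1$ are excluded by absolute continuity.) Thus the entire claim reduces to the single numerical identity
\begin{equation}
	\label{Eq:KeyIntegral}
	\int_0^1 \sin(\pi u)\, u^{u}(1-u)^{1-u}\, du = \frac{\pi e}{24}.
\end{equation}

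Establishing \eqref{Eq:KeyIntegral} is the main obstacle, since the integrand has no elementary antiderivative. The approach I would take is to use the identity $u^{u}(1-u)^{1-u} = e^{\,u \log u + (1-u)\log(1-u)} = e^{-H_B(u)}$ together with a series or integral representation that linearizes the exponent. One promising route is to expand $e^{-H_B(u)}$ (equivalently, to use the known Fourier/Taylor expansion of $\sin(\pi u)\,u^u(1-u)^{1-u}$ on $[0,1]$) and integrate term by term against $\sin(\pi u)$; another is to invoke a known closed form — this integrand is related to classical "sophomore's dream"-type integrals and to Hausdorff-moment expansions of $e^{-H_B}$ — and match constants. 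The symmetry $u \mapsto 1-u$ leaves the integrand invariant, which can be used to simplify intermediate expressions and to reduce the computation to $[0,1/2]$. Once \eqref{Eq:KeyIntegral} is in hand, the constant $24/(\pi e)$ is seen to be precisely the normalizing factor, the right-hand side of the displayed reduction equals $1$, and combined with nonnegativity this proves that $\rho_f$ is a valid PDF; smoothness of $\rho_f$ as an element of $C^\infty(\mathbb{R})$ (hence its membership in the relevant function space) follows from the smoothness of $F$, $\sin$, the entropy term on $(0,1)$, and $f$, as already recorded before the theorem statement.
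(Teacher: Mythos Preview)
Your reduction is correct and matches the paper exactly: nonnegativity is immediate, and the substitution $u=F(x)$ collapses the normalization question to the single identity
\[
\int_0^1 \sin(\pi u)\,u^{u}(1-u)^{1-u}\,du=\frac{\pi e}{24}.
\]
The gap is that you do not actually prove this identity. You flag it as ``the main obstacle'' and then gesture at a few possible routes (expanding $e^{-H_B(u)}$ in a series and integrating term by term, invoking a sophomore's-dream analogy, exploiting the $u\mapsto 1-u$ symmetry), but none of these is carried out, and none of them is obviously going to close. In particular, the sophomore's-dream integrals $\int_0^1 x^{\pm x}\,dx$ evaluate to infinite series, not closed forms, so the analogy does not by itself deliver the rational multiple $\pi e/24$; and a naive power-series expansion of $e^{-H_B(u)}$ against $\sin(\pi u)$ produces a sum that is not transparently $\pi e/24$ either. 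As written, the proposal is a correct outline with its hardest step left open.

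The paper's proof of the identity is by contour integration. One writes $S=\int_0^1 z^{z}(1-z)^{1-z}e^{i\pi z}\,dz$, substitutes $t=\log z-\log(1-z)$ (so $z=e^t/(e^t+1)$), and is led to integrate
\[
g(u)=e^{\,u e^{u}/(e^{u}-1)}\,\frac{e^{u}}{(e^{u}-1)^{3}}
\]
along the horizontal line $\Im u=\pi$. This $g$ is meromorphic on the strip $|\Im u|\le\pi$ with a single pole at $u=0$ of residue $-e/24$; closing up with a rectangular contour and using the residue theorem gives $2S=2i\,\pi e/24$, and taking imaginary parts yields the desired integral. If you want your write-up to stand on its own, you should either execute one of your suggested approaches in full or supply this residue computation.
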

\begin{proof}
	To prove that $\rho_f(x)$ is a valid PDF, we need to show that $\rho_f(x) \geq 0$ for all $x \in \mathbb{R}$ and that $\int_{-\infty}^{\infty} \rho_f(x) \, dx = 1$. The non-negativity of $\rho_f(x)$ is clear due to the non-negativity of the terms involved in its definition. Next, the normalization condition can be verified by the change of variables $z = F(x)$, yielding
	\begin{equation}
		\int_{-\infty}^{\infty} \rho_f(x) \, dx = \int_{0}^{1} \left(\frac{24}{\pi e}\right) \sin (\pi z) z^z(1-z)^{1-z} dz.
	\end{equation}	
	As shown in the Appendix, the integral	
	\begin{equation}
		\int_{0}^{1} \sin (\pi z) z^z(1-z)^{1-z} dz = \frac{\pi e}{24},
	\end{equation}	
	which, in turn, implies that	
	\begin{equation}
		\int_{-\infty}^{\infty} \rho_f(x) \, dx = 1.
	\end{equation}
	Hence, $\rho_f(x)$ is a valid PDF.
\end{proof}

Furthermore, since the derangetropy functional $\rho_f(x)$ is a valid PDF, it possesses well-defined theoretical properties, such as the existence of a mode. The following theorem demonstrates that, for any symmetric unimodal distribution, the mode of $\rho_f(x)$ coincides with the median of the underlying distribution.
\begin{theorem}
	For any symmetric unimodal probability distribution having PDF $f(x)$, the derangetropy functional $\rho_f(x)$ is maximized at the median of the distribution.
\end{theorem}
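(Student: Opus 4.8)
The plan is to use the factorization $\rho_f(x) = \tfrac{24}{\pi e}\,\phi(F(x))\,f(x)$, where
\[
\phi(z) \;=\; \sin(\pi z)\, z^{z}(1-z)^{1-z}, \qquad z \in (0,1),
\]
and to maximize the two non-negative factors $\phi(F(x))$ and $f(x)$ separately. Let $m$ be the centre of symmetry of $f$. Symmetry gives $F(m) = \tfrac12$, so $m$ is the median of the distribution; moreover, since the set on which a unimodal density attains its maximum is an interval symmetric about $m$, it contains $m$, whence $f(x) \le f(m)$ for all $x$. Hence, once we show that $\phi(F(x)) \le \phi(\tfrac12) = \tfrac12$, i.e. that $\phi$ is maximized on $(0,1)$ at $z = \tfrac12$, we obtain
\[
\rho_f(x) \;=\; \tfrac{24}{\pi e}\,\phi(F(x))\,f(x) \;\le\; \tfrac{24}{\pi e}\cdot \tfrac12 \cdot f(m) \;=\; \rho_f(m),
\]
which is precisely the claim that $\rho_f$ is maximized at the median.

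To study $\phi$, I would pass to its logarithm, $\log\phi(z) = \log\sin(\pi z) + z\log z + (1-z)\log(1-z)$, which gives $\phi'(z) = \phi(z)\, h(z)$ with
\[
h(z) \;=\; \pi\cot(\pi z) + \log\frac{z}{1-z}.
\]
This $h$ is exactly the bracket appearing in the derivative formula \eqref{Eq:Derivative} with the $f'/f$ term removed, and one checks immediately that $h(\tfrac12) = 0$. Since $\phi > 0$ on $(0,1)$, the sign of $\phi'$ equals that of $h$, so it suffices to prove that $h$ is strictly decreasing on $(0,1)$: this forces $h > 0$ on $(0,\tfrac12)$ and $h < 0$ on $(\tfrac12,1)$, hence $\phi$ increases then decreases with a unique maximum at $z = \tfrac12$.

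The decisive step — and the one I expect to be the main obstacle — is the monotonicity of $h$. Differentiating,
\[
h'(z) \;=\; \frac{1}{z(1-z)} - \frac{\pi^{2}}{\sin^{2}(\pi z)},
\]
so $h' < 0$ on $(0,1)$ is equivalent to the trigonometric inequality $\sin^{2}(\pi z) < \pi^{2} z(1-z)$ on that interval. I would establish this by setting $q(z) = \pi^{2}z(1-z) - \sin^{2}(\pi z)$ and computing $q''(z) = -2\pi^{2}\bigl(1 + \cos(2\pi z)\bigr) = -4\pi^{2}\cos^{2}(\pi z) \le 0$; thus $q$ is concave on $[0,1]$ with $q(0) = q(1) = 0$ and $q$ not identically zero (e.g. $q(\tfrac12) = \tfrac{\pi^{2}}{4} - 1 > 0$), which forces $q > 0$ on $(0,1)$. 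This yields $h' < 0$ throughout $(0,1)$ and completes the argument.

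Finally, the conclusion can alternatively be read off directly from \eqref{Eq:Derivative}: at $x = m$ each of the three bracket terms vanishes — $\pi\cot(\pi F(m)) = 0$, $\log\!\bigl(F(m)/(1-F(m))\bigr) = 0$, and $f'(m) = 0$ by symmetry — while for $x < m$ we have $F(x) < \tfrac12$, so $h(F(x)) > 0$, together with $f'(x) \ge 0$ since a symmetric unimodal density increases up to $m$; hence $\rho_f$ is non-decreasing on $(-\infty, m]$ and, by symmetry, non-increasing on $[m,\infty)$.
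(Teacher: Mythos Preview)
Your proof is correct and takes a genuinely different route from the paper's. The paper proceeds via the first- and second-derivative tests: it verifies that the bracket in \eqref{Eq:Derivative} vanishes at $m$, then asserts that the second derivative of $\rho_f$ is negative there by appealing to $f''(m)\le 0$, and finally invokes symmetry to pass from local to global maximality. Your argument instead exploits the factorization $\rho_f = \tfrac{24}{\pi e}\,\phi(F)\,f$ and maximizes the two nonnegative factors separately, reducing everything to the single-variable claim that $\phi$ peaks at $\tfrac12$; the concavity computation $q''(z)=-4\pi^2\cos^2(\pi z)\le 0$ together with $q(0)=q(1)=0$ is a clean way to obtain the key inequality $\sin^2(\pi z)<\pi^2 z(1-z)$ and hence $h'<0$. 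What this buys you is a direct global argument that does not require $f$ to be twice differentiable (the paper's second-derivative test tacitly assumes this), and the monotonicity of $h$ is established rather than left implicit. The paper's route is shorter on the page but leans on a second-derivative expression whose sign it does not fully analyze term by term; your approach is more self-contained, and the inequality $\sin^2(\pi z)<\pi^2 z(1-z)$ on $(0,1)$ is a pleasant by-product.
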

\begin{proof}			
	We aim to demonstrate that $\rho_f(x)$ attains its maximum at the median $m$, where $F(m)=0.5$. Given the symmetry of the distribution, the median $m$ coincides with the mode, and the PDF $f(x)$ is symmetric around $m$, implying that $f^{\prime}(m)=0$. By analyzing the derivative of the derangetropy functional, it follows that $x=m$ is a critical point of $\rho_f(x)$.
	To confirm that $m$ is indeed the point of maximum, we examine the second derivative of $\rho_f(x)$ at $x=m$:
	\begin{equation}
		\frac{d^2 \rho_f(x)}{d x^2}=\rho_f(m)\left[-\pi^2 \csc ^2(\pi F(m))+\left.\frac{d^2}{d x^2} \log \left(\frac{F(x)}{1-F(x)}\right)\right|_{x=m}+\frac{f^{\prime \prime}(m)}{f(m)}\right]
	\end{equation}				
	Since $f^{\prime \prime}(m) \leq 0$ due to the unimodal nature of the distribution, the second derivative is negative at $x=m$, confirming that $\rho_f(x)$ has a local maximum at this point. The symmetry of the distribution ensures that this local maximum at $m$ is, in fact, the global maximum of the derangetropy functional. Thus, $\rho_f(x)$ is maximized at the median $m$ of any symmetric unimodal distribution.efore, $\rho_f(x)$ is maximized at the median of any symmetric unimodal distribution.
\end{proof}

\subsection{Differential Equation}

The analysis of information dynamics within a probability distribution, particularly within the framework of the derangetropy functional, leads to the derivation of a nonlinear second-order ordinary differential equation that governs the evolution of informational content given by the following theorem.

\begin{theorem}
	Let $X$ be a random variable following a uniform distribution on the interval (0,1). Then, the derangetropy functional $\rho_f(x)$ satisfies the following second-order linear ordinary differential equation:
	\begin{equation}
		\frac{d^2 \rho_f(x)}{d F(x)^2}+4 \operatorname{atanh}(1-2 F(x)) \frac{d \rho_f(x)}{d F(x)}+\left[\pi^2-\frac{1}{F(x)(1-F(x))}+4 \operatorname{atanh}^2(1-2 F(x))\right]\rho_f(x)=0,
	\end{equation}	
	where $\operatorname{atanh}(z)=\frac{1}{2} \log \left(\frac{1+z}{1-z}\right)$ and the initial conditions are given by
	$$
	\left.\rho_f(x)\right|_{F(x)=0}=0, \quad \text { and }\left.\quad \frac{d \rho_f(x)}{d F(x)}\right|_{F(x)=0}=\frac{24}{e}.
	$$
\end{theorem}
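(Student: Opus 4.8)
The plan is to exploit the fact that for the uniform distribution on $(0,1)$ one has $F(x)=x$ and $f(x)=1$, so that $\rho_f$ depends on $x$ only through $F(x)$ and differentiation with respect to $F(x)$ coincides with differentiation with respect to $x$. Writing $u=F(x)$ and $g(u)=\rho_f(x)=\frac{24}{\pi e}\sin(\pi u)\,u^u(1-u)^{1-u}$, the task reduces to verifying that $g$ solves the stated equation on $(0,1)$ together with the two boundary conditions at $u=0$. (Note that the coefficients of the equation involve only $u=F(x)$, which is why the equation is linear in this special case, even though the general derangetropy relation is nonlinear.)

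First I would record the logarithmic derivative. Specializing the first-derivative identity (Eq.~\eqref{Eq:Derivative}) to $F(x)=x$, $f(x)=1$, $f'(x)=0$ gives $g'(u)=\phi(u)g(u)$ with $\phi(u)=\pi\cot(\pi u)+\log\!\big(u/(1-u)\big)$. The algebraic bridge to the form in the statement is the identity $\log\!\big(u/(1-u)\big)=-2\operatorname{atanh}(1-2u)$, which follows directly from the definition of $\operatorname{atanh}$. Next I would differentiate once more to obtain $g''(u)=\big(\phi'(u)+\phi(u)^2\big)g(u)$, and compute $\phi'(u)=-\pi^2\csc^2(\pi u)+\frac{1}{u(1-u)}$, using $\frac{d}{du}\cot(\pi u)=-\pi\csc^2(\pi u)$ together with $\frac{d}{du}\operatorname{atanh}(1-2u)=-\frac{1}{2u(1-u)}$ (the latter because $1-(1-2u)^2=4u(1-u)$).

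Then I would substitute $g'=\phi g$ and $g''=(\phi'+\phi^2)g$ into the left-hand side of the claimed equation and factor out $g$. Setting $a=\operatorname{atanh}(1-2u)$ so that $\phi=\pi\cot(\pi u)-2a$, the cross terms collapse: $\phi^2+4a\phi=\pi^2\cot^2(\pi u)-4a^2$, so the surviving bracket becomes $\phi'+\pi^2\cot^2(\pi u)-4a^2+\pi^2-\frac{1}{u(1-u)}+4a^2=-\pi^2\csc^2(\pi u)+\pi^2\cot^2(\pi u)+\pi^2$, which vanishes identically since $\csc^2=1+\cot^2$. Hence $g$ satisfies the ODE on $(0,1)$.

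Finally I would check the initial data. The value $g(0)=0$ is immediate from $\sin(0)=0$ together with the convention $0^0=1$. For the derivative I would use the local expansion $\sin(\pi u)\sim\pi u$ and $u^u(1-u)^{1-u}\to1$ as $u\to0^+$, giving $g(u)\sim\frac{24}{\pi e}\cdot\pi u=\frac{24}{e}u$ and therefore $\frac{dg}{du}\big|_{u=0}=\frac{24}{e}$. I expect the only genuinely delicate point to be this second initial condition: since $g'(u)=\phi(u)g(u)$ is of indeterminate form $\infty\cdot0$ at the boundary, with the $\pi\cot(\pi u)$ and $\log\!\big(u/(1-u)\big)$ pieces of $\phi$ blowing up with opposite signs, the clean route is the asymptotic evaluation of $g$ itself rather than a direct limit of the product. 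Everything else is routine differentiation and trigonometric simplification.
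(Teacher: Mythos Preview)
Your proof is correct but proceeds by a genuinely different route than the paper. You verify the ODE directly: write $g'=\phi g$ via logarithmic differentiation, compute $g''=(\phi'+\phi^2)g$, substitute into the left-hand side, and watch all terms cancel by the Pythagorean identity $\csc^2=1+\cot^2$. The paper instead works in the other direction: it applies the integrating factor $\mu=\exp\!\big(-\!\int 2\operatorname{atanh}(1-2F)\,dF\big)$, substitutes $\rho_f=\mu\,v$, and shows the ODE collapses to the harmonic oscillator $v''+\pi^2 v=0$; the general solution $v=C_1\sin(\pi F)+C_2\cos(\pi F)$ is then pinned down by the initial data to recover $\rho_f$. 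Your approach is shorter and more elementary, since it requires no change of variables and no existence/uniqueness reasoning---just differentiation and trigonometric simplification. The paper's reduction, on the other hand, explains \emph{why} a sinusoidal factor appears: the substitution $\rho_f=\mu v$ strips off exactly the entropic weight $u^u(1-u)^{1-u}$, leaving the pure oscillator equation whose solution is $\sin(\pi F)$. Both arguments handle the initial condition at $F=0$ in essentially the same way, and your remark about the indeterminate form $\infty\cdot 0$ in $g'=\phi g$ at the boundary is well taken.
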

\begin{proof}
	To eliminate the first-order derivative term, we multiply the entire equation by the integrating factor $\mu(F(x))$, where:	
	\begin{equation}
		\mu(F(x))=e^{\int-2 \operatorname{atanh}(1-2 F(x)) d F(x)}=(F(x)-1) e^{-2 \operatorname{atanh}(1-2 F(x))} .
	\end{equation}
	Next, introduce a new function $v(F(x))$ such that:
	\begin{equation}
		\rho_f(x)=\mu(F(x)) v(F(x))
	\end{equation}
	Substituting this expression into the differential equation simplifies it to:
	\begin{equation}
		\frac{d^2 v(F(x))}{d F(x)^2}+\pi^2 v(F(x))=0.
	\end{equation}
	The general solution to this simplified differential equation is:	
	\begin{equation}
		v(F(x))=C_1 \sin (\pi F(x))+C_2 \cos (\pi F(x))
	\end{equation}	
	where $C_1$ and $C_2$ are constants. Finally, evaluating the initial values yields $C_1=\frac{24}{\pi e}$ and $C_2=0$, implying that $\rho_f(x)$ is indeed a solution to the differential equation.
\end{proof}
This differential equation encapsulates the intricate dynamics of the derangetropy functional, driven by both linear and nonlinear influences.	At the median of the distribution, where $F(x)=0.5$, the function $\operatorname{atanh}(1-2 F(x))$ vanishes, leading to a stabilization in the rate of change of the informational content. This indicates that the median acts as a point of equilibrium, where the evolution of the derangetropy functional is locally stable. 

As $F(x)$ approaches the boundaries of the distribution, the inverse hyperbolic tangent function diverges, introducing significant nonlinearity into the equation. This divergence signals that the informational content becomes highly sensitive near the boundaries, where the probability mass is concentrated. The differential equation suggests that there is an accumulation or concentration of informational change at these boundaries. This is intuitive, as the cumulative distribution $F(x)$ compresses the probability mass into smaller regions near the boundaries, leading to more significant changes in the derangetropy functional. The boundary-sensitive term $\frac{1}{F(x)(1-F(x))}$, inversely proportional to the variance of a Bernoulli distribution with probability of success $F(x)$, further amplifies this effect, ensuring that the derangetropy functional adjusts dynamically in response to the distribution's proximity to its limits.

The stabilization at the median and the sensitivity near the boundaries highlight the complex interplay between linear and nonlinear dynamics within the differential equation. The coexistence of linear and nonlinear terms suggests that the evolution of the derangetropy functional is neither purely smooth nor entirely chaotic. Instead, it follows a complex pattern where smooth changes can be abruptly influenced by nonlinear effects, particularly near the boundaries and around the median. This interaction creates a dynamic equilibrium where the informational content is continually adjusted by the competing influences of linearity and nonlinearity. The derangetropy functional, therefore, adapts to different regions of the distribution, ensuring that the evolution of information is context-dependent.

The differential equation governing the derangetropy functional can also be interpreted through the framework of utility theory, offering a rigorous analogy between the evolution of informational content within a probability distribution and the concept of economic utility. As information begins to accumulate within the distribution, the utility—represented by the derangetropy functional—initially experiences rapid growth, analogous to the increasing satisfaction an economic agent derives from consuming additional units of a good. This phase of growth continues as the distribution approaches its median, where the utility reaches a peak. At this critical juncture, the informational content is optimally balanced, and the system achieves a state of maximum utility, reflecting the most efficient allocation of informational resources. The median serves as a natural equilibrium point, analogous to the optimal allocation of resources in economic theory, where utility is maximized through a balanced distribution of probability mass.

As the distribution progresses beyond the median and nears its boundaries, the utility derived from additional informational content begins to diminish, illustrating the principle of diminishing marginal utility in economics. The differential equation governing the derangetropy functional captures this transition, revealing that the rate of change in utility decreases as the cumulative distribution function approaches the extremes of the distribution. This decline in utility is akin to the reduced satisfaction an economic agent experiences when over-consuming a good, where the benefits of further consumption become increasingly marginal and may even turn negative. Consequently, the derangetropy functional not only models the accumulation and optimization of information but also encapsulates the risks associated with over-concentration. This interpretation underscores the functional's role in maintaining a balance analogous to the equilibrium sought in economic utility optimization, thereby providing a profound connection between information theory and economic principles.

\section{Self-referential Nature}
The derangetropy functional, denoted by \(\rho_f(x)\), is distinguished by its self-referential nature, whereby \(\rho_f(x)\) itself is a valid PDF. This implies that \(\rho_f(x)\) not only encapsulates the informational content of the underlying distribution \(f(x)\) but also recursively uncovers the hierarchical structure of its own information content. This recursive and hierarchical relationship is formalized as:
\begin{equation}
	\rho_f^{(n)}(x) = \left(\frac{24}{\pi e}\right) \sin\left(\pi \mathcal{G}_f^{(n-1)}(x)\right)\left(\mathcal{G}_f^{(n-1)}(x)\right)^{\mathcal{G}_f^{(n-1)}(x)}\left(1 - \mathcal{G}_f^{(n-1)}(x)\right)^{1 - \mathcal{G}_f^{(n-1)}(x)} \rho_f^{(n-1)}(x),
\end{equation}
where \(\rho_f^{(n)}(x)\) represents the \(n\)th iteration of the derangetropy functional, and $$\mathcal{G}_f^{(n)}(x) = \int_0^x \rho_f^{(n)}(t)\, dt$$ is the associated CDF. The initial conditions are set by 
$$\rho_f^{(0)}(x) = f(x) \quad \text{and} \quad \mathcal{G}_f^{(0)}(x) = F(x),$$ 
where \(F(x)\) denotes the CDF of the original distribution. This recursive structure ensures that each subsequent layer \(\rho_f^{(n)}(x)\) integrates the informational content of all preceding layers, thereby constructing a comprehensive hierarchical representation of information.

To elucidate this self-referential property, we consider the first and second layers of the derangetropy functional for the uniform distribution over the interval \((0, 1)\), as illustrated in Figure \ref{fig:Derangetropy_Layers_Uniform}. The first layer, \(\rho^{(1)}(x)\), reflects the basic informational structure of the uniform distribution. In contrast, the second layer, \(\rho^{(2)}(x)\), introduces additional complexity, manifesting in pronounced peaks and troughs. This evolution underscores the recursive amplification inherent in the derangetropy functional, progressively unveiling deeper informational layers with each iteration.

\begin{figure}[!htb]
	\centering
	\includegraphics[scale=0.45]{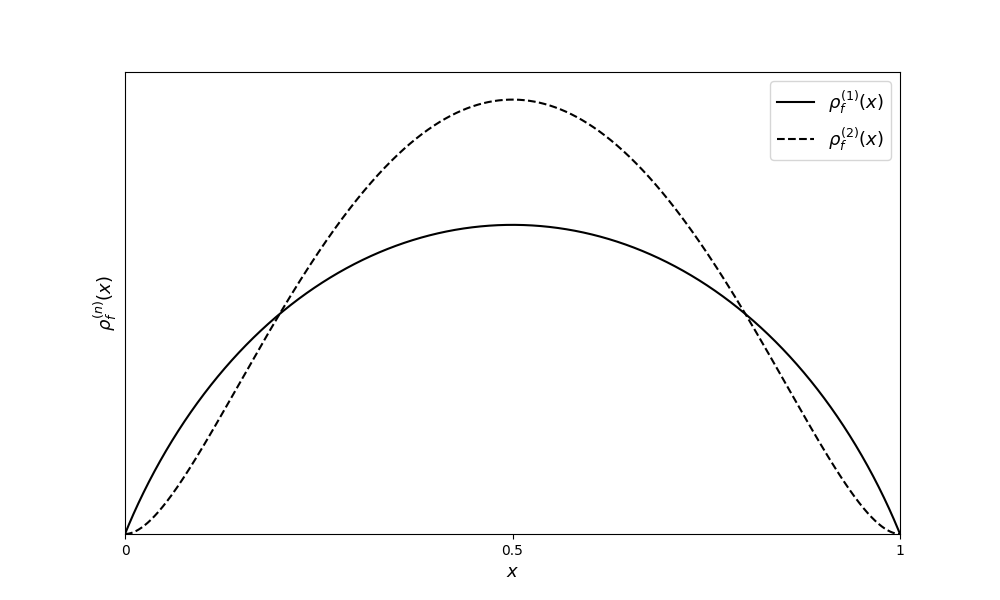}
	\caption{Plots of \(\rho_f^{(1)}\) (solid line) and \(\rho_f^{(2)}\) (dashed line) for $\operatorname{Uniform}(0,1)$ distribution.}
	\label{fig:Derangetropy_Layers_Uniform}
\end{figure}

\begin{figure}[!htb]
	\centering
	\includegraphics[scale=0.45]{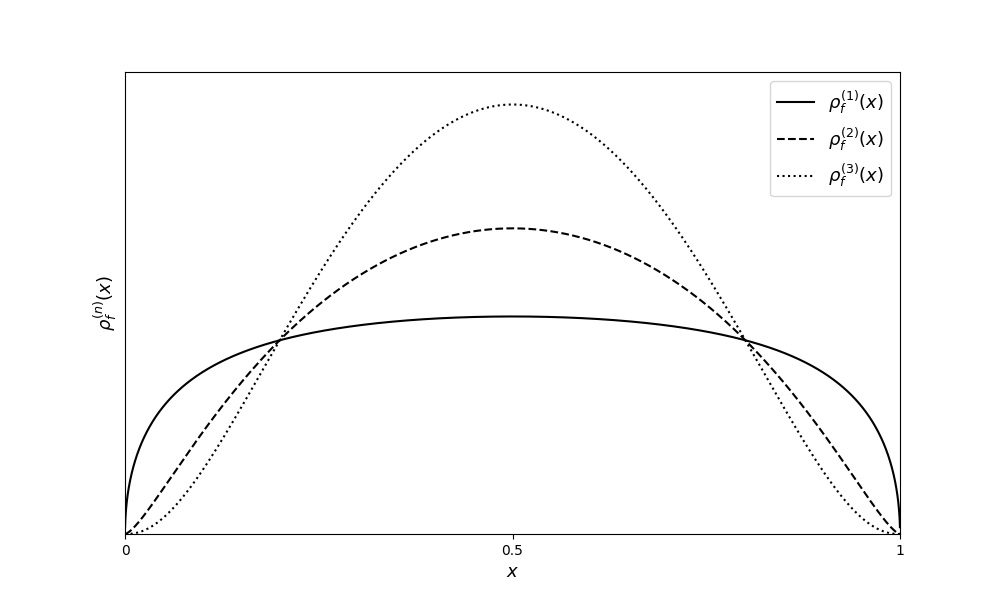}
	\caption{Plots of \(\rho_f^{(1)}\) (solid line), \(\rho_f^{(2)}\) (dashed line) and \(\rho_f^{(3)}\) (dotted line) for $\operatorname{Arcsin}(-1,1)$ distribution.}
	\label{fig:Derangetropy_Layers_Arcsin}
\end{figure}

Further insights into the self-referential nature of the derangetropy functional can be obtained by examining its behavior for the arcsin distribution, as shown in Figure \ref{fig:Derangetropy_Layers_Arcsin}. The first layer captures the strong boundary effects characteristic of the arcsin distribution, resulting in a pronounced concave upward profile near the interval’s endpoints. As recursion advances to the second layer, these boundary effects are amplified, giving rise to more pronounced oscillations and a complex informational structure. The transition in concavity observed between the first and second layers signifies a shift from a simple, boundary-focused representation to a more intricate depiction of the distribution’s information content.

The third layer introduces a bell-shaped curve, indicative of a stabilization process within the recursive structure. The previously amplified boundary effects become harmonized, redistributing the information more uniformly across the distribution. This bell-shaped curve suggests a tendency towards centralization and equilibrium within the distribution as recursion progresses. The smoothing of oscillations in the third layer reflects the derangetropy functional’s inherent ability to guide the system towards an informational equilibrium, where initially sharp boundary effects are moderated and the information distribution becomes more balanced and Gaussian-like.

An intriguing and nontrivial observation emerges when comparing the second- and third-level derangetropies of the arcsin distribution with the first- and second-level derangetropies of the uniform distribution. Notably, the second-level derangetropy of the arcsin distribution closely resembles the first-level derangetropy of the uniform distribution, while the third-level derangetropy of the arcsin distribution mirrors the second-level derangetropy of the uniform distribution. This resemblance reveals a deep structural connection between these distributions under the framework of the derangetropy functional.

The transformation \(X = \sin^2\left(\frac{\pi}{2} U\right)\) such that $U\sim \operatorname{Uniform}(0,1)$, which maps the uniform distribution to the arcsin distribution, is pivotal in explaining this connection. The sine function within the derangetropy functional plays a crucial role in this transformation. When applied to the arcsin distribution, whose CDF is given by \(\mathcal{G}_{\text{Arcsin}}(x) = \frac{2}{\pi} \arcsin(\sqrt{x})\), the sine function simplifies as follows:
\begin{equation}
	\sin\left(\pi \mathcal{G}_{\text{Arcsin}}(x)\right) = \sin\left(2 \arcsin(\sqrt{x})\right) = 2 \sqrt{x(1 - x)}.
\end{equation}
This expression directly mirrors the transformation \(X = \sin^2\left(\frac{\pi}{2} U\right)\), establishing a structural similarity between the arcsin and uniform distributions. The recursive nature of the derangetropy functional, which inherently tends to smooth and centralize information, reflects this underlying transformation. The arcsin distribution, characterized by its strong boundary effects, requires additional recursive layers to attain a similar informational structure to that of the uniform distribution. This explains why the second-level derangetropy of the arcsin distribution mirrors the first-level derangetropy of the uniform distribution, and why the third-level derangetropy of the arcsin distribution resembles the second-level derangetropy of the uniform distribution. Through its recursive application, the derangetropy functional reveals the deep structural relationships between these distributions.

Lastly, we note that as $n$ approaches infinity, the recursive process drives the distribution towards a degenerate distribution centered at the median. Mathematically, this means that the $nth$ level derangetropy $\rho_f^{(n)}(x)$ converges in distribution to a Dirac delta function $\delta(x-m)$, where all the probability mass is concentrated at the median $x=m$; i.e.,	
$$
\lim _{n \rightarrow \infty} \rho_f^{(n)}(x)=\delta(x-m)
$$
This convergence behavior is a direct consequence of the derangetropy functional's design, which inherently favors the centralization of information. The recursive smoothing effect ensures that, with each iteration, the distribution becomes more focused around the central point, ultimately leading to a situation where all mass is concentrated at the median. It is worth mentioning that the derangetropy functional exhibits self-similar patterns, particularly in the earlier stages of recursion. However, as the recursion deepens, the structure becomes more centralized about median and less fractal-like, leading to increasingly negative scaling exponents.

\section{Combinatorial Perspective}
The concept of derangetropy is deeply rooted in combinatorial principles, particularly the notion of derangements. In combinatorics, a derangement is a permutation of a set where no element appears in its original position, representing a state of maximal disorder. Extending this concept to continuous probability distributions, the derangetropy functional $\rho_f(x)$ quantifies the local interplay between order and disorder at each point in the distribution.

A key mathematical tool in deriving the properties of derangetropy is Euler's reflection formula, given by:
\begin{equation}
	\sin(\pi z) = \frac{\pi}{\Gamma(z)\Gamma(1 - z)}, \quad \text{for } z \notin \mathbb{Z},
\end{equation}
where $\Gamma(z)$ is the Gamma function, a continuous analogue of the factorial function. This formula establishes a fundamental connection between the sine function and the combinatorial structures underlying the derangetropy functional. The derangetropy functional, initially defined by formula \eqref{Def:Derangetropy_Sin}
can be reformulated using Euler's reflection formula as:
\begin{equation}
	\rho[f](x) = \left(\frac{24}{e}\right) \cdot F(x)^{F(x)} (1-F(x))^{1-F(x)} \cdot \frac{f(x)}{\Gamma(F(x))\Gamma(1 - F(x))}.
\end{equation}

This reformulation highlights the recursive self-referential nature of the distribution, where the distribution refers to its own values at each point $x$, thereby influencing the overall informational dynamics of the system. The terms $F(x)^{F(x)}$ and $(1-F(x))^{1-F(x)}$ introduce a self-weighting modulation mechanism that amplifies disorder non-linearly in regions of high probability, while regions with large $1-F(x)$ values experience a similar recursive influence. In continuous distributions, the combinatorial complexity is encapsulated by the Gamma functions $\Gamma(F(x))$ and $\Gamma(1-F(x))$. These functions serve as continuous analogues of the derangement process, quantifying the degree of disorder at each point by representing the combinatorial complexity associated with arranging elements less than $x$ or greater than $x$, respectively. The interplay between these Gamma functions at each point and the self-weighing mechanism determine the local derangetropy and reflects the underlying combinatorial structure of the distribution.

Moreover, the system governed by the derangetropy functional continuously transitions between states of order and disorder. This dynamic conversion is modulated by the oscillatory and structural informational energies inherent within the distribution, reflecting the probabilistic laws that govern the system's behavior. Equilibrium is achieved when the total informational energy $E_{\text{Total}}(x)$ is minimized across the distribution, representing a balance between order and disorder. Deviations from equilibrium increase derangetropy, indicating higher disorder at specific points $x$ within the distribution, a behavior crucial for understanding extreme events or rare occurrences.

Consequently, the derangetropy functional also captures both local variability and global stability within the system. At any given point $x$, the local derangetropy is determined by whether $x$ aligns with its expected order. Aggregating these local measures provides a global assessment of the system’s informational content, consistent with the principles of complex systems, where global properties emerge from local interactions. This dual capability makes derangetropy a powerful tool for analyzing both micro and macro structures within a wide range of probabilistic systems.

\section{Conclusion and Future Work}

This paper introduced \textit{derangetropy}, a functional measure designed to capture the dynamic nature of information within probability distributions. By offering a functional representation rather than a scalar summary, derangetropy provides deeper insights into the interplay between order and disorder across a distribution’s entire support. The measure's self-referential and periodic properties facilitate a rigorous analysis of information stability and evolution, enhancing our understanding of complex systems.

The versatility of derangetropy was demonstrated through its mathematical formulation and empirical analysis across various distributions. This novel approach extends beyond the capabilities of traditional entropy measures, allowing for a more detailed examination of information dynamics.

Looking forward, several promising research directions emerge. In the realm of deep neural networks, derangetropy could regulate information flow, preventing overconcentration or dispersion and thereby enhancing network robustness and generalization. Incorporating equilibrium relations into the backpropagation process may also introduce more efficient training methods.

In quantum information theory, derangetropy could provide new tools for analyzing information dynamics in quantum systems, offering insights where traditional measures fall short. Additionally, applying derangetropy within statistical mechanics and thermodynamics could deepen our understanding of complex systems at both equilibrium and nonequilibrium states.

In summary, derangetropy represents a significant conceptual and practical advancement in information theory, opening new avenues for theoretical exploration and practical application. Future research will likely expand its utility, revealing further insights into the dynamics of information and its potential to enhance the design and function of deep neural networks and other complex systems.

\newpage
\section*{Appendix}
\textbf{Proof of Theorem \ref{Thrm:Normalization}.}

Let us compute $\int_{0}^{1} \sin (\pi z) z^z(1-z)^{1-z}dz$ by defining
\begin{equation}
	\begin{aligned}
		S
		\, := \, &  \, \int_{0}^{1} z^z \, (1-z)^{1-z} \, e^{i\pi z} \, dz   \\
		\, = \, &  \, \int_{0}^{1} (1-z) \, \dfrac{z^z}{(1-z)^z} \, e^{i\pi z} \, dz   \\
		\, = \, &  \,  \int_{0}^{1} (1-z) \, \dfrac{e^{z\log z}}{e^{z\log(1-z)}} \, e^{i\pi z} \, dz    \\
		\, = \, &  \,  \int_{0}^{1} (1-z) \, e^{z( i\pi + \log z - \log(1-z) )} \, dz \, .   \\
	\end{aligned}
\end{equation}
By substituting $t=\log z - \log(1-z)$, we get $$ z=\dfrac{e^t}{e^t + 1} \, , $$ which in turn yields
\begin{equation}
	\begin{aligned}
		S
		\, = \, &  \, \int_{-\infty+i\pi}^{+\infty+i\pi} \dfrac{1}{e^t + 1} \, e^{(i\pi+t)\dfrac{e^t}{e^t + 1}} \, \dfrac{e^t}{(e^t+1)^2} \, dt   \\
		\, = \, &  \, \int_{-\infty+i\pi}^{+\infty+i\pi} e^{\dfrac{te^t}{e^t-1}} \, \frac{e^t}{(e^t-1)^3} \, dt \, .  \\
	\end{aligned}
\end{equation}
Now consider the following function  
\begin{equation}
	g(u) = e^{\dfrac{ue^u}{e^u-1}} \, \dfrac{e^u}{(e^u-1)^3} \, .
\end{equation}
This function is meromorphic on the strip $$ D = \{ u\in \mathbb{C}: -\pi \leq \Im(u) \leq \pi \} $$ and its only point is located at $u=0$ having
$$ Res(g,0) = -\frac{e}{24} \, . $$
Furthermore, consider the rectangular clockwise path $\alpha$, as depicted in Figure \ref{fig:Contour}, which is composed of $\alpha = \alpha_1 \oplus \alpha_2 \oplus \alpha_3 \oplus \alpha_4$, yielding
\begin{equation}
	\oint_{\alpha}g(u) du
	\, = \,   \, -2\pi i\, Res(g,0) \, = \,   \,\, 2 i \, \frac{\pi e}{24} \, . 
\end{equation}
\begin{figure}[!htb]
	\centering
	\includegraphics[scale=0.4]{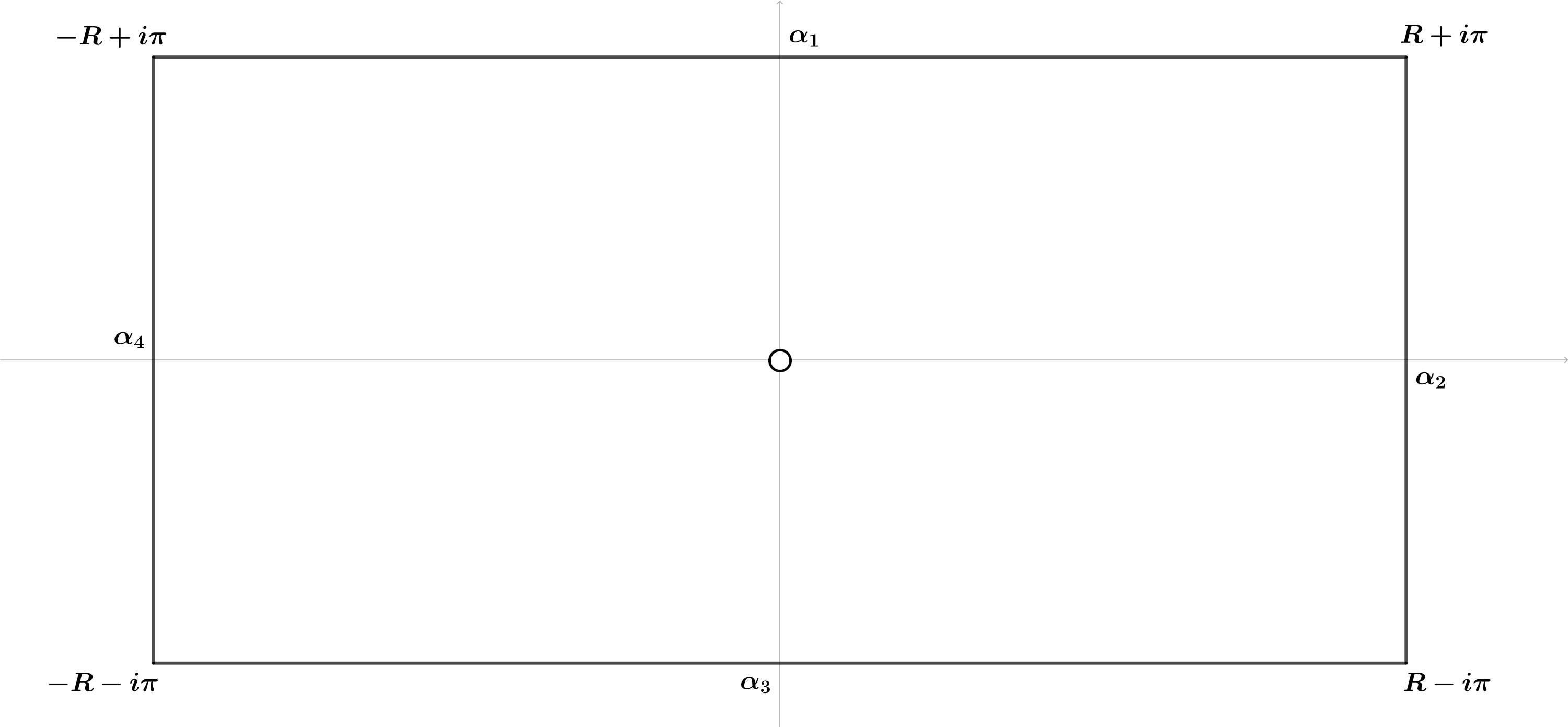}
	\caption{Schematic of the rectangular clockwise integration path.}
	\label{fig:Contour}
\end{figure}
Then, for any sequence $\{u_n\}\subset D$, we have that $|u_n|\to \infty$. In turn, this leads $g(u_n)$ to approach infinity. For this reason, $\int_{\alpha_2} g(u)du$ and $\int_{\alpha_4} g(u)du$ both become zero as $R\to \infty$. By further noting that $\int_{\alpha_1} g(u)du = \int_{\alpha_3} g(u)du$, one obtains $S=\int_{\alpha_1} g(u)du$ which yields $$\oint_{\alpha}g(u) du=2 i \frac{\pi e}{24}\cdot$$ Thus,
$$
\int_{0}^{1} \sin (\pi z) z^z(1-z)^{1-z}dz \, = \, \frac{\pi e}{24}.
$$

\bibliographystyle{splncs}      
\bibliography{cai}            

\end{document}